\newtheorem{proposition}{\textbf{Proposition}}
\newtheorem{property}{\textbf{Property}}
\newtheorem{remark}{\textbf{Remark}}
\newtheorem{objective}{\textbf{Objective}}
\DeclareSymbolFont{bbold}{U}{bbold}{m}{n}
\DeclareSymbolFontAlphabet{\mathbbold}{bbold}
\title{\LARGE \bf
Trajectory Planning of Automated Vehicles in Tube-like Road Segments}
\author{Mogens Graf Plessen
\thanks{MGP is with IMT School for Advanced Studies Lucca, Piazza S. Francesco 19, 55100 Lucca, Italy, {\tt\small mogens.plessen@imtlucca.it}}
}
\begin{document}

\maketitle
\thispagestyle{empty}
\pagestyle{empty}

\begin{abstract}

This paper presents a method based on linear programming for trajectory planning of automated vehicles, combining obstacle avoidance, time scheduling for the reaching of waypoints and time-optimal traversal of tube-like road segments. System modeling is conducted entirely spatial-based. Kinematic vehicle dynamics as well as time are expressed in a road-aligned coordinate frame with path along the road centerline serving as the dependent variable. We elaborate on control rate constraints in the spatial domain. A vehicle dimension constraint heuristic is proposed to constrain vehicle dimensions inside road boundaries. It is outlined how friction constraints are accounted for. The discussion is extended to dynamic vehicle models. The benefits of the proposed method are illustrated by a comparison to a time-based method. 

\end{abstract}

\section{Introduction\label{sec_intro}}

Autonomous vehicles draw immense interest from academia, industry, media and society. Of primary concern for industrial applicability are safety issues. Secondary important issues include increased driving comfort and increased fuel efficiency by means of cooperative driving, platooning \cite{larson2015distributed}, \cite{plessen2016multi}, and car-to-infrastructure communication for anticipative driving, e.g., at traffic lights \cite{asadi2011predictive}, \cite{kamal2010board}. 

Most devised system architectures differentiate between a path planning and a path tracking layer ~\cite{urmson2008autonomous}, \cite{werling2010optimal}, \cite{ziegler2014trajectory}, \cite{gu2013focused}, \cite{qian2016optimal}, \cite{werling2012automatic},  whereby simple control laws for tracking performed well \cite{hoffmann2007autonomous}. Other tracking methods are based on model predictive control (MPC) \cite{falcone2007predictive}, \cite{brown2017safe}. See \cite{gonzalez2016review} and \cite{paden2016survey} for two recent surveys of motion planning and control techniques for automated vehicles.   

Frequently, reference velocities are assumed to be provided by a higher-level algorithm, before a road centerline is tracked \cite{di2016vehicle}, \cite{kong2015kinematic}. The motivation and contribution of this paper is twofold. The presented method enables a) simultaneous planning of velocity and steering trajectories, and b) mission planning of automated vehicles by means of capabilities for obstacle avoidance, time scheduling for the reaching of waypoints, and time-optimal traversal of road segments. The presented method is based on linear programming. It is designed to generate trajectories exploiting the complete permissible road width rather than tracking the centerline only, thereby increasing driving comfort and safety through minimized steering actuation. This is since the lower the path curvature, the higher the admissible vehicle velocity that still permits operation within tire friction limits \cite{funke2016simple}. System modeling is conducted entirely \textit{spatial-based} with space replacing time as the dependent variable  \cite{gao2012spatial}, \cite{lot2014curvilinear}, \cite{verschueren2014towards}, \cite{graf2017spatial}, \cite{karlsson2016temporal}. In addition, control rate, vehicle dimension and friction constraints are discussed. 
\begin{figure}[ht]
\centering
\input{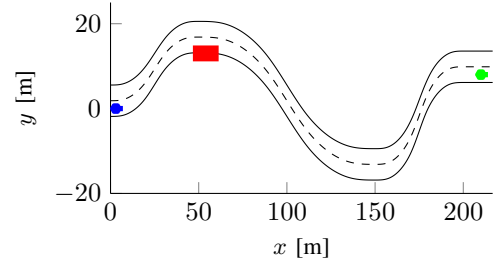}
\caption{Given the start pose (blue) of a vehicle, a path is sought avoiding any obstacles (represented by the red rectangle), traveling within corridor boundaries (black solid), respecting physical actuator constraints, and traveling within vehicle tire friction limits, such that an end pose (green) is reached. Here, two road lanes are separated by the black dashed line.}
\label{fig_ProblVisualization}
\end{figure}
The presented method is tailored for spatial-based predictive control \cite{graf2017spatial}. However, it is also illustrated how planned trajectories can serve time-based control frameworks for tracking \cite{falcone2007linear}. While the focus is on kinematic vehicle models \cite{rajamani2011vehicle} under additional consideration of friction constraints \cite{funke2016simple}, the extension to dynamic vehicle models \cite{kong2015kinematic} is also discussed.

This paper is organized as follows. Section \ref{sec_problFormulation} formulates the problem. The main contribution is given in Section \ref{sec_Main}. Section \ref{sec_numerical_experiments} states simulation results, before concluding.

\section{Problem Formulation\label{sec_problFormulation}}

Objective \ref{obj_probForm} formulates the problem addressed in this paper. See Fig. \ref{fig_ProblVisualization} for visualization.
\begin{objective}\label{obj_probForm}
Given a tube-like road described by its (real or virtual) road boundaries, develop a method for trajectory planning that permits to 
\begin{itemize}
\item find the time-optimal path trajectory,
\item time schedule the reaching of waypoints along the path,
\item simultaneously avoid any obstacles along the road, 
\end{itemize}
while being able to incorporate additional constraints.
\end{objective}
Notice that this paper does not address the combinatorial problem of deciding on which side to overtake obstacles \cite{qian2016optimal}. Instead, a driving corridor (tube-like road) is assumed, for which trajectories for traversal are sought. Trajectories here refer to state and input trajectories that can be fed to a reference trajectory tracker used in a two-layered control framework with reference trajectory planning and tracking as the two layers. Alternatively and preferably, control commands can be applied directly within a MPC-framework, combining planning and tracking in one step. State and input trajectories refer to spatial coordinates, vehicle heading (either explicitly or implicitly related to the road centerline), steering angle and traveling velocity.

\begin{figure}[ht]
\centering
\vspace{0.3cm}
\begin{tikzpicture}
\coordinate (R) at (1.8,1.55);
\coordinate (F) at ($ (R) + 4*({cos(30)},{sin(30)})$);
\coordinate (C) at ($ (R) + 1.667*({cos(30)},{sin(30)})$);
\coordinate (Ra) at ($ (C) - 2.6*({cos(30)},{sin(30)})$);
\coordinate (Raa) at ($ (C) - 2.92*({cos(30)},{sin(30)})$);
\coordinate (Rb) at ($ (C) + 3.6*({cos(30)},{sin(30)})$);
\def\psis{18} 
\def \wheelL {0.5}
\def \lineRhoSTop {2.2}
\def \lineRhoSBottom {1.7}
\draw[->] (0, 0) -- (C|-,0);
\draw[->] (0, 0) -- (0, |-C);
\node[color=black] (a) at ($ (C|-,0) + 0.25*({cos(0)},{sin(00)})$) {$x$}; 
\node[color=black] (a) at ($ (0, |-C) + 0.25*({cos(90)},{sin(90)})$) {$y$}; 
\draw[color=black] (R) -- (F);
\draw[rounded corners,black,fill=green!40,fill opacity=0.2,rotate around={30:(C)}] ($ (C) - ({2.6},{0.75})$) rectangle ($ (C) + ({3.6},{0.75})$);
\draw[color=black,|-|] ($ (Ra) - 1.1*({cos(-60)},{sin(-60)})$) -- ($ (C) - 1.1*({cos(-60)},{sin(-60)})$) node [midway, above] {$l_r$};
\draw[color=black,-|] ($ (C) - 1.1*({cos(-60)},{sin(-60)})$) -- ($ (Rb) - 1.1*({cos(-60)},{sin(-60)})$) node [midway, above] {$l_f$};
\draw[color=black,|-|] ($ (Raa) + 0.75*({cos(-60)},{sin(-60)})$) -- ($ (Raa) - 0.75*({cos(-60)},{sin(-60)})$) node [midway, left] {$2w$};
\draw[rounded corners,black,fill=black,fill opacity=0.2,rotate around={30:(R)}] ($ (R) - \wheelL*({cos(0)},{0.2*sin(90)})$) rectangle ($ (R) + \wheelL*({cos(0)},{0.2*sin(90)})$);
\draw[rounded corners,black,fill=black,fill opacity=0.2,rotate around={60:(F)}] ($ (F) - \wheelL*({cos(0)},{0.2*sin(90)})$) rectangle ($ (F) + \wheelL*({cos(0)},{0.2*sin(90)})$);
\fill (C) circle [radius=2pt];
\draw[color=black!50] (C) -- ($ (C) + 1.55*({cos(0)},{sin(0)})$);
\draw[->,color=red!50,>=latex',ultra thick] (C) -- ($ (C) + 1*({cos(30)},{sin(30)})$);
\node[color=black] (a) at ($ (C) + ({0.44*cos(30)},{0.45*sin(90)})$) {$v$};
\fill[color=blue!70] ($ (C) + \lineRhoSTop*({cos(90+\psis)},{sin(90+\psis)})$) circle [radius=1.5pt];
\draw[color=blue,->] ($ (C) + \lineRhoSTop*({cos(90+\psis)},{sin(90+\psis)})$) -- ($ (C) + \lineRhoSBottom*({cos(-(90-\psis))},{sin(-(90-\psis))})$) node [pos=0.11, left=0.0] {$\rho_s(s)$} node [pos=0.97, left=0.] {$s$};
\draw[color=black!50,->] ($ (C) + \lineRhoSTop*({cos(90+\psis)},{sin(90+\psis)}) + {\lineRhoSTop+\lineRhoSBottom}*({cos(-93)},{sin(-93)})$) arc (-93:-40:{\lineRhoSTop+\lineRhoSBottom});
\node[color=black!50] (a) at ($ (C) + (3.6,-0.25)$) {road centerline};
\draw[-,color=black!50, dashed] ($ (C) + \lineRhoSBottom*({cos(-(90-\psis))},{sin(-(90-\psis))}) - 2*({cos(\psis)},{sin(\psis)}) $) -- ($ (C) + \lineRhoSBottom*({cos(-(90-\psis))},{sin(-(90-\psis))}) + \lineRhoSTop*({cos(\psis)},{sin(\psis)}) $) node [pos=0.95, right=0.1] {tangent};
\draw[-,color=black!50, dashed] ($ (C) - 3*({cos(\psis)},{sin(\psis)}) $) -- ($ (C) + 3*({cos(\psis)},{sin(\psis)}) $);
\draw[-,color=black!50, dashed] ($ (F)$) -- ($ (F) + 1*({cos(30)},{sin(30)}) $);
\draw[-,color=black!50, dashed] ($ (F)$) -- ($ (F) + 1*({cos(60)},{sin(60)}) $);
\draw[->, color=blue] ($ (F) + 0.61*({cos(30)},{sin(30)})$) arc (30:60:0.65);
\node[color=blue] (a) at ($ (F) + 0.77*({cos(45)},{sin(45)})$) {$\delta$};
\draw[->, color=blue] ($ (C) + 1.25*({cos(\psis)},{sin(\psis)})$) arc (\psis:30:1.25);
\node[color=blue] (a) at ($ (C) + 1.5*({cos(22)},{sin(22)})$) {$e_\psi$};
\draw[->, color=blue] ($ (C) + 1.25*({cos(0)},{sin(0)})$) arc (0:\psis:1.25);
\node[color=blue] (a) at ($ (C) + 1.5*({cos(10)},{sin(10)})$) {$\psi_s$};
\draw[<-,color=blue] ($ (C) + 1.7*({cos(\psis)},{sin(\psis)}) $) -- ($ (C) + \lineRhoSBottom*({cos(-(90-\psis))},{sin(-(90-\psis))}) + 1.7*({cos(\psis)},{sin(\psis)}) $) node [pos=0.62, left=-0.1, color=blue] {$e_y$};
%
%
\end{tikzpicture}
\caption{A nonlinear dynamic bicycle model, including the representation of the curvilinear (road-aligned) coordinate system, and vehicle dimensions.}
\label{fig:spatial_bicycle_mdl}
\end{figure}
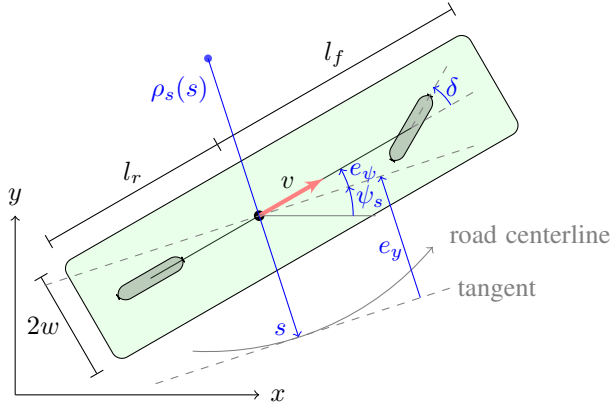

\section{Relating Time to Spatial-based Dynamics Modeling for Mission Planing\label{sec_Main}}

\subsection{Notation and spatial-based vehicle dynamics modeling\label{subsec_spaitalVehDyn}}

We adopt the notation of \cite{plessen2017spatial} and briefly summarize it. Consider a \textit{global} and a \textit{road-aligned} coordinate frame within the $(x,y)$- and $(s,e_y)$-plane, respectively. See Fig. \ref{fig:spatial_bicycle_mdl} for visualization. Coordinates can be transformed by projecting point masses to piecewise-affine (PWA) road line-segments. We differentiate between road centerline\footnote{Throughout, we use the term \textit{road centerline}, noting, however, that such a line may also refer to the centerline of one \textit{lane} of a multi-lane road.} coordinate $s\geq 0$, and actual path length coordinate $\eta\geq 0$ indicating the distance traveled by the vehicle. Note that $\eta \neq s$, unless the vehicle is traveling \textit{perfectly} along the road centerline. A trajetory planned at time $t$ can be defined as
\begin{equation}
z(s) = \begin{bmatrix} e_{\psi}(s) & e_{y}(s) \end{bmatrix}^T, \label{eq_def_mathcalT_etan}
\end{equation} 
with discrete $s\in[s_\tau, s_\tau+S]$, where $S>0$ denotes the road corridor length and $s_\tau$ the vehicle's location at time $\tau$. The equivalent trajectory in the $(x,y)$-plane is defined by $
\mathcal{X}(s) = \begin{bmatrix} x(s) & y(s) & \psi(s) \end{bmatrix}^T$. The classic nonlinear kinematic bicycle model \cite{rajamani2011vehicle} is
\begin{equation}
\begin{bmatrix} \dot{x} & \dot{y} & \dot{\psi} \end{bmatrix}^T = \begin{bmatrix} v\cos(\psi) & v\sin(\psi) & \frac{v}{l}\tan(\delta) \end{bmatrix}^T,\label{eq_dotxypsi_nonlinkinbicmdl}
\end{equation}
assuming the center of gravity (CoG) to be located at the rear axle and $l$ denoting the wheelbase. The front-axle steering angle $\delta$ and vehicle velocity $v$ are used as our control variables. Let us denote time and spatial derivatives by $\dot{x}=\frac{d x}{dt}$ and $x'=\frac{d x}{d s}$, respectively. The spatial equivalent of \eqref{eq_dotxypsi_nonlinkinbicmdl} is derived as follows \cite{graf2017spatial}. By $\dot{e}_\psi=\dot{\psi}-\dot{\psi}_s$, $\dot{e}_y=v \sin(e_\psi)$, $\dot{s}=\frac{\rho_s v \cos(e_\psi)}{\rho_s-e_y}$, 
according to Fig. \ref{fig_ProblVisualization}, 
and expressing $e_\psi'=\frac{\dot{e}_\psi}{\dot{s}}$ and $e_y' = \frac{\dot{e}_y}{\dot{s}}$, we obtain
\begin{equation}
\begin{bmatrix} e_\psi' & e_y' \end{bmatrix}^T \!=\! \begin{bmatrix} \frac{(\rho_s - e_y)\tan(\delta)}{\rho_s l \cos(e_\psi)} -\psi_s' & \frac{\rho_s-e_y}{\rho_s} \tan(e_\psi)  \end{bmatrix}^T.\label{eq_nonlinmdl_steering}
\end{equation}
Importantly, \eqref{eq_nonlinmdl_steering} is entirely independent of vehicle speed $v$. This is characteristic for \textit{kinematic} models, but not the case for \textit{dynamic} vehicle models \cite{graf2017spatial}. We relate control $v$ to our spatial-based modeling as discussed in Section \ref{subsec_relatingTime}. To summarize, we define spatially-dependent state  and control vectors as $z=\begin{bmatrix}e_\psi & e_y \end{bmatrix}^T$ and $u = \begin{bmatrix} v & \delta \end{bmatrix}^T$, respectively. We abbreviate \eqref{eq_nonlinmdl_steering} by $z' = f(z,u)$. Let a discretization grid along the road centerline be defined by $\{s_j\}_{j=0}^N = \{s_0,s_1,\dots,s_N\}$, where $s_j$ is abbreviated for $s_{\tau+j}$ when planning at time $\tau$. Thus, we also have $s_N = s_\tau + S$. The discretization grid is initialized as uniformly spaced. Additional grid points are added such that all (potentially safety margin-inflated) obstacle corners within the $(s,e_y)$-frame are taken into account. For a given set of references $\{e_{\psi,j}^\text{ref}\}_{j=0}^N$, $\{e_{y,j}^\text{ref}\}_{j=0}^N$ and $\{u_j^\text{ref}\}_{j=0}^{N-1}$, the linearized and discretized state dynamics of \eqref{eq_nonlinmdl_steering} are denoted by $z_{j+1} = A_jz_j + B_j u_j + g_j$. For the remainder of this paper, index $j$ refers to the \textit{spatial} discretization grid.

\subsection{Relating time}\label{subsec_relatingTime}

We seek to express time $t$ as a function of space coordinate $s$. We therefore relate $t'=\frac{1}{\dot{s}}$ and obtain
\begin{equation}
t' = \frac{\rho_s - e_y}{\rho_s v \cos(e_\psi)}.\label{eq_def_t_apostrophe}
\end{equation}
Throughout the following, we assume $e_\psi\in(-\frac{\pi}{2},\frac{\pi}{2})$, $\rho_s > e_y$ and $v\geq 0$. This is made to avoid poles at $e_\psi=\pm \frac{\pi}{2}$ in \eqref{eq_nonlinmdl_steering} and \eqref{eq_def_t_apostrophe}. Let the linearization of \eqref{eq_def_t_apostrophe} along the discretization grid be denoted by $t_{j+1}=t_j + a_{t,j} z_j + b_{t,j}u_j + g_{t,j}$.
\begin{property}\label{def_property1}
For timely mission planning (scheduling) and the formulation of convex optimization problems with linear constraints, we require $t_{j+l}> t_j,~\forall l > 0$.   
\end{property}
\begin{proposition}\label{def_proposition1}
The linearized dynamics of \eqref{eq_def_t_apostrophe}, $t_{j+1}=t_j + a_{t,j} z_j + b_{t,j}u_j + g_{t,j}$, violate Property \ref{def_property1}. 
\end{proposition}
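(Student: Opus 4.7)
My plan is to exhibit explicitly that the partial derivative of the right-hand side of \eqref{eq_def_t_apostrophe} with respect to $v$ is strictly negative at every admissible reference point, so the affine approximation in $v$ has unbounded negative values for sufficiently large $v$, which forces at least one linearized increment to be negative.

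First, I would write the right-hand side of \eqref{eq_def_t_apostrophe} as $\phi(z,u)=\frac{\rho_s-e_y}{\rho_s v\cos(e_\psi)}$ and compute its gradient at an arbitrary reference $(z_j^\text{ref},u_j^\text{ref})$ satisfying the standing assumptions $v^\text{ref}_j>0$, $\rho_s>e_{y,j}^\text{ref}$, and $e_{\psi,j}^\text{ref}\in(-\pi/2,\pi/2)$. The key term is
\begin{equation*}
\frac{\partial \phi}{\partial v}\bigg|_{(z_j^\text{ref},u_j^\text{ref})}
= -\frac{\rho_s-e_{y,j}^\text{ref}}{\rho_s\,(v_j^\text{ref})^{2}\cos(e_{\psi,j}^\text{ref})} < 0.
\end{equation*}
Since the coefficient of $v_j$ in $b_{t,j}u_j$ equals this partial derivative times the local step size $\Delta s_j=s_{j+1}-s_j>0$, the component of $b_{t,j}$ corresponding to $v_j$ is strictly negative.

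Next, I would exhibit an admissible choice that makes the linearized increment negative. Fix $z_j=z_j^\text{ref}$ and $\delta_j=\delta_j^\text{ref}$, and take $v_j=\kappa v_j^\text{ref}$ with $\kappa>0$. With this choice the linearized increment reduces to the scalar
\begin{equation*}
t_{j+1}-t_j \;=\; \Delta s_j\!\left(\frac{\rho_s-e_{y,j}^\text{ref}}{\rho_s\,v_j^\text{ref}\cos(e_{\psi,j}^\text{ref})}\right)\!(2-\kappa),
\end{equation*}
obtained by Taylor-expanding $\phi$ in $v$ around $v_j^\text{ref}$. For any $\kappa>2$, i.e.\ any $v_j>2v_j^\text{ref}$ within the physically admissible speed range, the increment is strictly negative, contradicting Property~\ref{def_property1}.

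I would close by noting that the same mechanism applies at every grid index $j$ and for every admissible reference, so the violation is structural rather than a pathology of a particular operating point: the nonlinear dependence $1/v$ in \eqref{eq_def_t_apostrophe} is not preserved by affine approximation, and any first-order linearization in $v$ overestimates the decrease of $t'$ for $v$ above the reference. The only routine obstacle is bookkeeping the Taylor remainder and checking that admissibility of $v_j\in[0,v_\text{max}]$ is consistent with the chosen $\kappa$; this is handled by picking $v_j^\text{ref}$ small enough that $2v_j^\text{ref}<v_\text{max}$, which is always possible in practice.
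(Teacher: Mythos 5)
Your proposal is correct and is essentially the paper's own proof: both identify the strictly negative velocity coefficient $b_{t,j}^v$ of the linearization, evaluate the increment at $z_j=z_j^\text{ref}$ so that it collapses to the factor $\bigl(2-\tfrac{v_j}{v_j^\text{ref}}\bigr)$ times a positive constant, and conclude that any $v_j>2v_j^\text{ref}$ yields $t_{j+1}\leq t_j$, contradicting Property~\ref{def_property1}. The only differences are cosmetic (you carry the step size $\Delta s_j$ explicitly and remark on admissibility of the counterexample velocity), so no further changes are needed.
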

\begin{proof}
We express $t_{j+1} = t_j + a_{t,j}^{e_\psi}(e_{\psi,j} - e_{\psi,j}^\text{ref}) + a_{t,j}^{e_y}(e_{y,j} - e_{y,j}^\text{ref}) + b_{t,j}^v(v_j - v_j^\text{ref}) + f_j^\text{ref}$ with $f_j^\text{ref}= f(z_j^\text{ref},u_j^\text{ref})$. We have $b_{t,j}^v = - \frac{\rho_j^\text{ref} - e_{y,j}^\text{ref}}{ \rho_j^\text{ref} (v_j^\text{ref})^2 \cos(e_{\psi,j}^\text{ref}) }$, which \textit{always} is negative. Then, $f_j^\text{ref} + b_{t,j}^{v}(v_j - v_j^\text{ref}) = \frac{\rho_{s,j}^\text{ref} - e_{y,j}^\text{ref} }{ \rho_{s,j}^\text{ref} v_j^\text{ref} \cos(e_{\psi,j}^\text{ref} ) }(2- \frac{v_j}{v_j^\text{ref}})$. To prove the proposition, it suffices to find one counterexample violating $t_{j+1}>t_j$. Such can be constructed by $e_{\psi,j}=e_{\psi,j}^\text{ref}$, $e_{y,j}=e_{y,j}^\text{ref}$ and $v_j > 2v_j^\text{ref}$, which concludes the proof.
\end{proof}
\begin{remark}\label{def_rmk1}
Proposition \ref{def_proposition1} is also valid when simplifying \eqref{eq_def_t_apostrophe} to $t' = 1/v\cos(e_\psi)$ under the assumption of $\rho_s\gg e_y$. It is further also valid for a small $e_\psi$-angle approximation with $\cos(e_\psi)\approx 1 - \frac{e_\psi^2}{2}$. This is since counterexamples can be constructed similarly to before. 
\end{remark}
Proposition \ref{def_proposition1} and Remark \ref{def_rmk1} have important implications. A minimization problem with objective $\min \{t_N\}$ can result in optimal solution $\min \{t_N\}<0$, thereby decoupling variable $t$ from physical interpretation and thus making it useless for time scheduling tasks.
\begin{proposition}\label{def_proposition2}
The coordinate transformation $q_j = \frac{1}{v_j}$ and dynamics
\begin{equation}
t_{j+1} = t_j + \frac{ (s_{j+1} - s_j)(\rho_{s,j}^\text{ref}-e_{y,j}^\text{ref}) }{\rho_{s,j}^\text{ref} \cos(e_{\psi,j}^\text{ref}) } q_j,
\label{eq_tjp1_uj}
\end{equation}
approximate the linearization of \eqref{eq_def_t_apostrophe} for finite reference velocity and satisfy Property \ref{def_property1}.
\end{proposition}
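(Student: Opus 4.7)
The plan is to verify the two claims packaged into the proposition separately: first that \eqref{eq_tjp1_uj} is indeed a reasonable approximation of the linearization of \eqref{eq_def_t_apostrophe} whenever $v_j^\text{ref}$ stays bounded away from zero, and second that \eqref{eq_tjp1_uj} satisfies Property \ref{def_property1}. Both parts hinge on the observation that under the bijective change of variable $q=1/v$ (well defined as long as $v>0$) the right hand side of \eqref{eq_def_t_apostrophe} becomes
$t' = \tfrac{(\rho_s-e_y)}{\rho_s \cos(e_\psi)}\,q$, which is \emph{affine} in $q$ rather than nonlinear in $v$. This is the structural feature that the proof will exploit.

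For the approximation claim I will freeze the nonlinear state-dependent factor $\tfrac{\rho_s-e_y}{\rho_s\cos(e_\psi)}$ at the reference values $(e_{\psi,j}^\text{ref},e_{y,j}^\text{ref},\rho_{s,j}^\text{ref})$, treat $q_j$ as the single decision variable, and apply an explicit Euler step over $[s_j,s_{j+1}]$; this yields exactly \eqref{eq_tjp1_uj}. To connect this to the \emph{linearization} of the original form I expand $q=1/v$ around $v_j^\text{ref}$, $q \approx \tfrac{1}{v_j^\text{ref}} - \tfrac{1}{(v_j^\text{ref})^2}(v_j-v_j^\text{ref})$, and substitute: the constant term reproduces $f_j^\text{ref}\cdot(s_{j+1}-s_j)$ and the coefficient of $(v_j-v_j^\text{ref})$ reproduces exactly the $b_{t,j}^v$ computed in the proof of Proposition \ref{def_proposition1}, while the $e_\psi$ and $e_y$ dependencies have been absorbed into the fixed coefficient. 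Thus \eqref{eq_tjp1_uj} agrees with the direct linearization up to $\mathcal{O}((v_j-v_j^\text{ref})^2)$, provided $v_j^\text{ref}$ is finite and bounded away from zero so that the Taylor expansion of $1/v$ is valid.

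For Property \ref{def_property1} I argue monotonicity directly from \eqref{eq_tjp1_uj}. Under the standing assumptions declared after \eqref{eq_def_t_apostrophe}, namely $e_{\psi,j}^\text{ref}\in(-\pi/2,\pi/2)$, $\rho_{s,j}^\text{ref}>e_{y,j}^\text{ref}$, and $v_j\geq 0$, every factor in the coefficient multiplying $q_j$ is strictly positive, and $s_{j+1}-s_j>0$ by construction of the discretization grid $\{s_j\}_{j=0}^N$. Since $q_j=1/v_j\geq 0$ (with strict inequality whenever the vehicle is in forward motion), the increment $t_{j+1}-t_j\geq 0$. A telescoping sum over $j,j+1,\dots,j+l-1$ then delivers $t_{j+l}\geq t_j$, with strict inequality whenever any $v_{j+k}>0$ for $0\leq k<l$, which is the content of Property \ref{def_property1}.

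The step I expect to be the main obstacle is phrasing the word ``approximate'' precisely: strictly speaking, \eqref{eq_tjp1_uj} is the \emph{exact} Euler discretization of a \emph{reference-frozen} $(e_\psi,e_y,\rho_s)$ reformulation, not the Jacobian linearization in the variables $(e_\psi,e_y,v)$. The Taylor argument above is what legitimizes calling it an approximation of the original linearization; I would therefore state it as an equivalence modulo second-order terms in $(v_j-v_j^\text{ref})$ and first-order terms in $(e_{\psi,j}-e_{\psi,j}^\text{ref}),(e_{y,j}-e_{y,j}^\text{ref})$, making explicit the finite-$v_j^\text{ref}$ assumption needed to keep the remainder bounded.
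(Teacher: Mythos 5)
Your proof is correct, and it rests on the same structural observation as the paper's: after the substitution $q=1/v$ the time dynamics \eqref{eq_def_t_apostrophe} become affine in $q$, so freezing $(e_\psi,e_y,\rho_s)$ at the reference and taking an Euler step over $[s_j,s_{j+1}]$ gives exactly \eqref{eq_tjp1_uj}, whose coefficient is strictly positive under the standing assumptions ($e_{\psi,j}^\text{ref}\in(-\pi/2,\pi/2)$, $\rho_{s,j}^\text{ref}>e_{y,j}^\text{ref}$, increasing grid), whence $t_{j+1}>t_j$ whenever $q_j>0$; your Property~\ref{def_property1} argument is in fact more explicit than the paper's, which leaves the positivity/monotonicity step implicit. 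Where you genuinely diverge is in substantiating ``approximate the linearization'': the paper linearizes the transformed discrete dynamics in the variables $(e_\psi,e_y,q)$, observes that the state-deviation coefficients $a_{t,j}^{e_\psi,q}$, $a_{t,j}^{e_y,q}$ are proportional to $q_j^\text{ref}$ and can take either sign (hence must be eliminated to guarantee Property~\ref{def_property1}), and concludes that \eqref{eq_tjp1_uj} is the \emph{exact} linearization in the limit $q_j^\text{ref}\to 0$ (i.e.\ $v_j^\text{ref}\to\infty$) and an approximation for finite $v_j^\text{ref}$. You instead map \eqref{eq_tjp1_uj} back to the $v$-coordinate linearization of Proposition~\ref{def_proposition1} via the Taylor expansion of $1/v$ about $v_j^\text{ref}$, recovering $f_j^\text{ref}$ and $b_{t,j}^v$ and quantifying the discrepancy as second order in $v_j-v_j^\text{ref}$ plus first order in the state deviations. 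Both arguments are sound: the paper's buys the clean limit statement that explains the phrase ``for finite reference velocity'' and exposes the design rationale (why the state terms must be dropped at all), while yours makes the connection to Proposition~\ref{def_proposition1} concrete and pins down the order of the error, at the price of the extra hypothesis that $v_j^\text{ref}$ be bounded away from zero. One wording correction: mid-proof you say the $e_\psi$, $e_y$ dependencies are ``absorbed into the fixed coefficient''---they are \emph{dropped}, not absorbed, and this dropping is simultaneously the sole source of the approximation error and the very mechanism that secures Property~\ref{def_property1}; your closing paragraph states this correctly, so make that the statement of record.
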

\begin{proof}
Under the coordinate transformation and abbreviating $D_{s,j} = s_{j+1}-s_j$, the discrete equation of \eqref{eq_def_t_apostrophe} is $t_{j+1} = t_j + D_{s,j} \frac{\rho_{s,j}-e_{y,j}}{\rho_{s,j}\cos(e_{\psi,j})}q_j$, and linearized $t_{j+1} = t_j + a_{t,j}^{e_\psi,q}(e_{\psi,j}-e_{\psi,j}^\text{ref}) + a_{t,j}^{e_y,q}(e_{y,j}-e_{y,j}^\text{ref}) + b_{t,j}^q q_j$. Note that $a_{t,j}^{e_\psi,q}$ and $a_{t,j}^{e_y,q}$ are proportional to $q_j^\text{ref}$ and can be positive- \textit{or} negative. Thus, to guarantee $t_{j+1} > t_j$, they must be eliminated from $t_{j+1}$-dynamics. This is achieved for $q_j^\text{ref}\rightarrow 0$ (i.e., $v_j^\text{ref}\rightarrow \infty$) by proportionality. The remainder yields \eqref{eq_tjp1_uj}, which is the exact linearization of \eqref{eq_def_t_apostrophe} for $v_j^\text{ref}\rightarrow \infty$, and approximate for finite reference velocities.
\end{proof}
Importantly, the coordinate transformation does not affect the linearized and discretized state dynamics for $z_{j+1}$. This is since they are independent of $v_j$, and consequently also independent of $q_j$. This is in contrast to control rate constraints, as discussed next. For simplicity, we continue to denote the new control vector by $u$, i.e., $u=\begin{bmatrix} q & \delta \end{bmatrix}^T$.

\subsection{Control rate constraints}

Continuous rate constraints for control variables $v$ and $\delta$ are of the form 
\begin{equation}
\dot{v}^\text{min} \leq \dot{v} \leq \dot{v}^\text{max} \quad \text{and} \quad  \dot{\delta}^\text{min} \leq \dot{\delta} \leq \dot{\delta}^\text{max},
\label{eq_time_rateCstrts}
\end{equation}
whereby the bounds ($\dot{v}^\text{min}$, $\dot{v}^\text{max}$, $\dot{\delta}^\text{min}$ and $\dot{\delta}^\text{max}$), in general, are nonlinear functions of the vehicle's operating point. They are time-varying parameters, for example, dependent on engine speed and torque. By applying the spatial coordinate transformation, a discretization, the change of variables according to Section \ref{subsec_relatingTime}, and assuming the bounds to remain constant for the duration of the planning horizon, we obtain

\begin{small}
\begin{align}
&  \frac{D_{s,j} \dot{v}^\text{min} (\rho_{s,j} - e_{y,j}) q_j}{\rho_{s,j} \cos(e_{\psi,j})} \leq \frac{1}{q_{j+1}} - \frac{1}{q_j} \leq  \frac{D_{s,j} \dot{v}^\text{max} (\rho_{s,j} - e_{y,j}) q_j}{\rho_{s,j} \cos(e_{\psi,j})},\label{eq_deltaq_rateCstrts}\\
&  \frac{D_{s,j} \dot{\delta}^\text{min} (\rho_{s,j} - e_{y,j}) q_j}{\rho_{s,j} \cos(e_{\psi,j})} \leq \delta_{j+1} - \delta_j \leq \frac{ D_{s,j} \dot{\delta}^\text{max} (\rho_{s,j} - e_{y,j}) q_j}{\rho_{s,j} \cos(e_{\psi,j})},\label{eq_deltad_rateCstrts}
\end{align}
\end{small}\normalsize
whereby we abbreviated $D_{s,j}=s_{j+1} - s_j$. Thus, linear and control channel-separated rate constraints in \eqref{eq_time_rateCstrts} are rendered not only nonlinear, but additionaly state-dependent, and also velocity-dependent for steering control. This has two implications. First, to formulate linearly constrained optimization problems, we require the linearization of \eqref{eq_deltaq_rateCstrts} and \eqref{eq_deltad_rateCstrts}. Dependent on the quality of underlying reference trajectories, this may incur significant distortions. Second, while the discrete form of \eqref{eq_time_rateCstrts} can always be guaranteed to be feasible (assuming a feasible initialization), for \eqref{eq_deltaq_rateCstrts} and \eqref{eq_deltad_rateCstrts} this is not the case anymore. Thus, slack variables are required. Let us consider two dregrees of simplification of \eqref{eq_deltaq_rateCstrts} and \eqref{eq_deltad_rateCstrts}. First, we assume large $\rho_{s,j}$ and small $e_{\psi,j}$, and consequently approximate
$\frac{(\rho_{s,j} - e_{y,j})}{\rho_{s,j} \cos(e_{\psi,j})} \approx 1$, thereby rendering \eqref{eq_deltaq_rateCstrts} and \eqref{eq_deltad_rateCstrts} state-independent, but maintaining velocity-dependent bounds. Thus, steering rate constraints still depend on $q_j$. Second, we additionally eliminate this velocity-dependency and formulate 
\begin{align}
&  \tilde{T} \dot{v}^\text{min} \leq \frac{1}{q_{j+1}} - \frac{1}{q_j} \leq \tilde{T} \dot{v}^\text{max},~j=0,\dots,N-2,\label{eq_deltaq_rateCstrts_III}\\
&  \tilde{T} \dot{\delta}^\text{min}  \leq \delta_{j+1} - \delta_j \leq \tilde{T} \dot{\delta}^\text{max},~j=0,\dots,N-2,\label{eq_deltad_rateCstrts_III}
\end{align}
whereby, in general, $\tilde{T}$ is a parameter choice. It is typically selected as the sampling time $T_s$ in a closed-loop MPC-setting. For spatial-based predictive control, $T_s$ must be related to the spatial discretization grid \cite{graf2017spatial}. Formulations \eqref{eq_deltaq_rateCstrts_III} and \eqref{eq_deltad_rateCstrts_III} bear the advantage of separating control channels and are therefore our preferred form for \textit{spatial-based} rate constraints. In practice, we employ bounds that are constant over the spatial planning horizon, but time-varying in a closed-loop MPC-setting and dependent on the vehicle's operating point. We denote the linearization of \eqref{eq_deltaq_rateCstrts_III} by 
\begin{equation}
c_{q,j}^\text{min} \leq b_{q,j+1}q_{j+1} + b_{q,j} q_j \leq c_{q,j}^\text{max},~j=0,\dots,N-2.\label{eq_deltaq_rateCstrts_IIIlinear}
\end{equation}
To summarize this section, the transformation of time-dependent control rate constraints \eqref{eq_time_rateCstrts} to the road-aligned coordinate frame is not trivial and to be considered as the main disadvantage of a spatial-based system representation. We opted for the ``simple'' control channel-seperating forms \eqref{eq_deltad_rateCstrts_III} and \eqref{eq_deltaq_rateCstrts_IIIlinear} in linearly constrained optimization problems, and discussed the role of $\tilde{T}$ as a transformation parameter.

\subsection{Vehicle dimension constraint heuristic\label{subsec_VDCH}}

For the navigation of automated vehicles, especially in very constrained environments, vehicle dimensions must be accounted for. This is particularly relevant for large-sized vehicles such as heavy-duty trucks or buses. A spatial-based planning method, that is based on iterative linearization of nonlinear vehicle dimension constraints, is presented in \cite{plessen2017spatial}. Here, a different approach is taken. A velocity-dependent and computationally much less demanding \textit{heuristic} is proposed. For clarification, the entire discussion of relating time to spatial-based system modeling, that is essential for the simultaneous planning of \textit{both} steering and velocity trajectories when using kinematic vehicle models in the spatial domain, is also absent in \cite{plessen2017spatial}.

We denote road corridor constraints as 
\begin{equation}
e_{y,j}^\text{min} + \Delta e_{y,\tau} \leq e_{y,j} \leq  e_{y,j}^\text{max} - \Delta e_{y,\tau}, ~j=1,\dots,N,\label{eq_def_Deltaey_VDC_corridor}
\end{equation}
with corridor boundaries $e_{y,j}^\text{min}$ and $e_{y,j}^\text{max}$, and margin $\Delta e_{y,\tau}\geq 0$ determined at time $\tau$. For point mass trajectory planning without vehicle dimension constraints we have $\Delta e_{y,\tau}=0$. 
\begin{proposition}\label{prop_Deltaey_VDC}
To guarantee safe vehicle operation within road boundaries according to \eqref{eq_def_Deltaey_VDC_corridor}, assuming rectangular vehicle dimensions with $w<l_f$, and assuming forward motion, we require $\Delta e_{y,\tau} = \max_{e_\psi\in\mathcal{E}} \Delta e_{y,\tau}(e_\psi)$ with
\begin{equation}
\Delta e_{y,\tau}(e_\psi)  = l_f \sin(e_\psi) + w \cos(e_\psi),\label{eq_max_Deltaeytau}
\end{equation}  
and $\mathcal{E}=(-\pi/2,\pi/2)$.
\end{proposition}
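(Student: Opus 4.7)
The plan is to analyze the lateral extent of a rectangular vehicle body placed in the road-aligned frame and to select the worst-case lateral excursion over admissible heading errors. I would fix the reference point (the CoG, at the rear axle per the kinematic bicycle assumption) at $(s,e_y)$ with body axis tilted by $e_\psi$ relative to the road tangent, and place the four vehicle corners at body-frame offsets $(l_f,\pm w)$ and $(-l_r,\pm w)$. Under the local rigid transformation from body to curvilinear frame (valid for vehicle dimensions small compared to $\rho_s$), a corner at body offset $(x_b,y_b)$ contributes a lateral offset $\Delta e_y(x_b,y_b;e_\psi)=x_b\sin(e_\psi)+y_b\cos(e_\psi)$ to its $e_y$-coordinate, so the four candidate contributions are $\pm l_f\sin(e_\psi)\pm w\cos(e_\psi)$ and $\mp l_r\sin(e_\psi)\pm w\cos(e_\psi)$.

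Next I would identify the corner that determines the required symmetric margin. For $e_\psi\geq 0$ the front-left corner attains the largest positive offset, $l_f\sin(e_\psi)+w\cos(e_\psi)$, while the mirror symmetry of the rectangle maps $e_\psi<0$ back to this same expression evaluated at $|e_\psi|$ (via the front-right corner). The hypothesis $w<l_f$, together with $l_r\le l_f$ that follows from placing the CoG at the rear axle with only a bounded rear overhang, ensures that no rear corner exceeds the corresponding front corner in magnitude for any $e_\psi\in\mathcal{E}$. Consequently, the minimal lateral clearance needed at orientation $e_\psi$ is exactly $\Delta e_{y,\tau}(e_\psi)=l_f\sin(e_\psi)+w\cos(e_\psi)$, and it must be enforced on both boundaries by symmetry, matching the role of $\Delta e_{y,\tau}$ in \eqref{eq_def_Deltaey_VDC_corridor}.

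Since $\Delta e_{y,\tau}$ is a single scalar fixed over the planning horizon but $e_\psi$ may, in principle, attain any value in $\mathcal{E}=(-\pi/2,\pi/2)$, I would take the supremum to obtain $\Delta e_{y,\tau}=\max_{e_\psi\in\mathcal{E}}\Delta e_{y,\tau}(e_\psi)=\sqrt{l_f^2+w^2}$, attained at $e_\psi^\star=\arctan(l_f/w)\in(0,\pi/2)$. The exclusion of the endpoints $\pm\pi/2$ is consistent with the standing assumption $e_\psi\in(-\pi/2,\pi/2)$ used throughout Section~\ref{subsec_relatingTime}, and the forward-motion assumption enters only to keep the front/rear corner labels consistent along the horizon. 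The main obstacle I anticipate is cleanly showing that front corners dominate rear corners uniformly in $e_\psi$: one must carefully pair front-left with rear-right (and front-right with rear-left) under the reflection $e_\psi\mapsto -e_\psi$, since $w<l_f$ controls only the front-left versus front-right comparison, whereas the front-rear comparison relies on the implicit $l_r\le l_f$. Once that bookkeeping is settled, the remainder is a one-dimensional trigonometric maximization of $l_f\sin(e_\psi)+w\cos(e_\psi)=\sqrt{l_f^2+w^2}\sin(e_\psi+\arctan(w/l_f))$.
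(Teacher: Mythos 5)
Your proof is correct and takes essentially the same approach as the paper, whose entire proof is the one-line appeal ``From the definitions of $l_f$ and $w$ in Fig.~\ref{fig:spatial_bicycle_mdl}'': you simply make explicit the geometry that line gestures at, namely projecting the rectangle's corners into the road-aligned frame, noting the front-left corner $l_f\sin(e_\psi)+w\cos(e_\psi)$ dominates, and maximizing to get $\sqrt{l_f^2+w^2}$ at $e_\psi=\tan^{-1}(l_f/w)$, which matches the paper's subsequent Remark. Your flagged bookkeeping point---that dismissing the rear corners needs $l_r\le l_f$, an assumption the paper leaves implicit in its figure---is a valid observation, but it does not change the route of the argument.
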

\begin{proof} From the definitions of $l_f$ and $w$ in Fig. \ref{fig:spatial_bicycle_mdl}. 
\end{proof}
\begin{remark}
The selection of $\Delta e_{y,\tau}$ according to Proposition \ref{prop_Deltaey_VDC} is conservative. Let us denote the associated angle by $e_\psi^\text{max}$, whereby $e_\psi^\text{max}=\tan^{-1}(l_f/w)$ is derived from the maximization. For $w=0.9$ and $l_f=3.5$ used in simulations, we obtain $e_\psi^\text{max}=75.6^\circ$. Such a large deviation from the road heading is not admissible at high speeds. By tighter constraining $\mathcal{E}$, the level of conservativeness can be reduced. Note that we have $w\leq \Delta e_{y,\tau}(e_\psi) \leq \Delta e_{y,\tau}(e_\psi^\text{max})$ for $e_\psi\in[0,e_\psi^\text{max}]$, and that $\Delta e_{y,\tau}(e_\psi)$ is strictly monotonously increasing for that heading range. 
\end{remark}
In the following, we derive a heuristic for the selection of $\Delta e_{y,\tau}$. The ideas are a) to relate to vehicle  speed, and b) to obtain monotonously increasing $\Delta e_{y,\tau}$ with increasing $v$. Let $v^\text{max}$ denote the maximum highway speed limit (e.g., $v^\text{max}=120$km/h), and $v_\tau$ the traveling velocity at time $\tau$. Let us first state the algorithm before discussing two variants: 
\begin{enumerate}
\item Determine an angle $e_\psi^v\in[0,e_\psi^\text{max}]$. 
\item Compute $e_\psi=\frac{v_\tau}{v^\text{max}} e_\psi^v$.
\item Compute $\Delta e_{y,\tau} = l_f \sin(e_\psi) + w \cos(e_\psi)$ and adapt corridors in \eqref{eq_def_Deltaey_VDC_corridor}.
\end{enumerate}
Note that by design, both of the mentioned motivating ideas are addressed. Also, the level of conservativeness can be controlled by $e_\psi^v$. We considered two options. First, $e_\psi^v = e_\psi^\text{max}$. Second, we note that specific vehicle velocities only admit a very limited deviation from the road heading direction to take corrective steering action within a limited ``reaction time'' $\Gamma$. Assuming a road boundary $\tilde{e}_y^\text{max}>0$ and a vehicle position $\tilde{e}_{y} < \tilde{e}_y^\text{max}$,  the front left vehicle corner reaching the road boundary $\tilde{e}_y^\text{max}$ within time $\Gamma$ can be expressed as
\begin{equation}
\tilde{e}_{y} + \Gamma v_\tau \sin(\tilde{e}_{\psi}) + l_f \sin(\tilde{e}_\psi) + w \cos(\tilde{e}_\psi) = \tilde{e}_y^\text{max}, \label{eq_ReactionTime}
\end{equation}
where $\tilde{e}_\psi\geq 0$ denotes the vehicle heading. Thus, for our second variant, we solve \eqref{eq_ReactionTime} analytically for $\tilde{e}_\psi$, and set $e_\psi^v = \tilde{e}_\psi$ in step 1). This method is less conservative/generates smaller $\Delta e_{y,\tau}$ than the first variant with $e_\psi^v = e_\psi^\text{max}$. This is since $\tilde{e}_\psi<e_\psi^\text{max}$ for typical parameter choices, and because $\Delta e_{y,\tau}(e_\psi)$ is strictly monotonously increasing for an $e_\psi\in[0,e_\psi^\text{max}]$. Note that the front left vehicle corner was considered for derivation. This is appropriate for our parameter choices of $\tilde{e}_y=|e_y(s_\tau)|$ and $\tilde{e}_y^\text{max}=\min_{j} \{\min(e_{y,j}^\text{max},|e_{y,j}^\text{min}|)\}_{j=1}^N$, thereby accounting for \textit{both} road boundaries. In simulations of Section \ref{sec_numerical_experiments}, we assumed $l_f=3.5$, $w=0.9$, and set parameter $\Gamma=0.05$ for racing performance (time-optimal ``cutting'' of curves). For increased safety/larger $\Delta e_{y,\tau}$, $\Gamma$ must be further decreased.

\subsection{Linear programming\label{subsec_OPformulation}}

We propose the following linear programming (LP):
\small
\begin{subequations}
\label{eq:LP}
\begin{align}
\min &\ \  t_N+ \max|\delta| +  \max|D_1 \delta| +  W_\sigma \sum\nolimits_{i=1}^4 \sigma_i \label{eq_OP_objFcn}\\
\mathrm{s.t.} &\ \ z_0 = z(s_\tau),~t_0=\tau,~u_{-1}=u(s_\tau-D_s) \label{eq_OP_1stcstrt}\\
&\ \ z_j = \begin{bmatrix} e_{\psi,j} & e_{y,j}  \end{bmatrix}^T, \ j =0,\dots,N, \\
&\ \ u_j = \begin{bmatrix} q_j & \delta_j  \end{bmatrix}^T, \ j =0,\dots,N-1, \\
&\ \ z_{j+1} = A_j z_j + B_ju_j + g_j, \ j =0,\dots,N-1, \label{eq:OP_zjp1_eq}\\
&\ \ t_{j+1} = t_j + \frac{ D_{s,j}(\rho_{s,j}^\text{ref}-e_{y,j}^\text{ref}) q_j}{\rho_{s,j}^\text{ref} \cos(e_{\psi,j}^\text{ref}) }, \ j=0,\dots,N-1, \label{eq:OP_tjp1_eq}\\
&\ \ e_{\psi}(s_\tau+S) - \sigma_1 \leq e_{\psi,N} \leq  e_{\psi}(s_\tau+S) + \sigma_1, \\
&\ \ e_{y}(s_\tau+S) - \sigma_2 \leq e_{y,N} \leq  e_{y}(s_\tau+S) + \sigma_2, \\
&\ \ e_{y,j}^\text{min} + \Delta e_{y,\tau} - \sigma_3 \leq e_{y,j} \leq  e_{y,j}^\text{max} - \Delta e_{y,\tau} + \sigma_3,\notag\\[-1pt]
& \hspace{5cm} \ j = 1,\dots,N, \label{eq:OP_ey_constrts}\\
& t_j^\text{WP} - \sigma_4 \leq t_j \leq t_j^\text{WP} + \sigma_4,~z_j\in\mathcal{Z}_j^\text{WP}, \ \forall j\in\mathcal{J}^\text{WP},\label{eq:OP_tWP_cstrts}\\
& c_{q,j}^\text{min} \leq b_{q,j+1}q_{j+1} + b_{q,j} q_j \leq c_{q,j}^\text{max},~j=0,\dots,N-2,\label{eq:OP_deltaq_cstrts}\\
& \tilde{T} \dot{\delta}^\text{min}  \leq \delta_{j+1} - \delta_j \leq \tilde{T} \dot{\delta}^\text{max},~j=0,\dots,N-2,\\
&\ \ \frac{1}{v^\text{max}} \leq q_j \leq  \frac{1}{v^\text{min}}, \ j=0,\dots,N-1, \label{eq:OP_qj_hardcstrts}\\
&\ \ \delta^\text{min} \leq \delta_j \leq  \delta^\text{max}, \ j=0,\dots,N-1, \\
& \ \ \sigma_1\geq 0,~\sigma_2\geq 0,~\sigma_3\geq 0,~\sigma_4\geq 0, 
\end{align}
\end{subequations} \normalsize
with decision variables $\{u_j\}_{j=0}^{N-1}$, $\{\sigma_i\}_{i=1}^4$ and optimization horizon $N$. The absolute value is denoted by $|\cdot|$ and $D_1$ indicates the spatial-based first-order difference operator acting on vectorized steering angle $\delta \in\mathbb{R}^{N\times 1}$. Objective function \eqref{eq_OP_objFcn} trades-off time-optimality and a minmax-type objective resulting in minimized steering actuation (\textit{smooth steering}). In experiments we selected $W_\sigma=10^4$. It is the only weight in \eqref{eq_OP_objFcn}. Spatiotemporal constraints \eqref{eq:OP_tWP_cstrts} are used for time scheduling. They indicate the times at which \textit{waypoints} (WP) are meant to be traversed. We define $\mathcal{J}^\text{WP} = \{j:s_j=s_j^\text{WP},~s_j^\text{WP}\in\mathcal{S}^\text{WP},~j=1,\dots,N\}$, where $\mathcal{S}^\text{WP}$ is an input set that may be provided, for example, by a higher-level mission planning algorithm. In addition to $\mathcal{S}^\text{WP}$, such algorithm must provide the corresponding scheduling times $\mathcal{T}^\text{WP} = \{t_j^\text{WP},~\forall j\in\mathcal{J}^\text{WP}\}$. The states in which the waypoints are reached is constrainted by $\mathcal{Z}_j^\text{WP}$.  Hard constraints \eqref{eq:OP_qj_hardcstrts} are derived from $v^\text{min} \leq v_j \leq v^\text{max},~j=0,\dots,N-1$, and from the coordinate transformation according to Section \ref{subsec_relatingTime}, whereby $v^\text{max}$ denotes the road speed limit and $v^\text{min}\geq 0$ the minimum permissible velocity. Since \eqref{eq:OP_zjp1_eq}, \eqref{eq:OP_tjp1_eq} and \eqref{eq:OP_deltaq_cstrts} depend on reference trajectories, \eqref{eq:LP} is solved \textit{twice}, see Section \ref{subsec_IncorpFrictCstrts}. This often improved results since time-optimal trajectories typically exploit the complete road width, and therefore incur lateral deviations from the road centerline. For the first iteration, we initialize state trajectories along the road centerline, i.e., $e_{\psi,j}^\text{ref}=0$ and $e_{y,j}^\text{ref}=0,~\forall j$, and select $q_j^\text{ref}=\frac{1}{v_\tau}$ and $\delta_j^\text{ref}=0,~\forall j$. Finally, moving obstacles are accounted for by their \textit{velocity}- and \textit{trajectory}-adjusted mappings to the road-aligned coordinate frame according to the method of \cite[Sect. III-E]{graf2017spatial}. 

\subsection{Incorporating friction constraints and summary of TOSS \label{subsec_IncorpFrictCstrts}}

LP \eqref{eq:LP} does not yet incorporate friction constraints. Two options were considered. The first was motivated by \cite{mcnaughton2011motion}, where a term penalizing lateral accelerations was added to the cost function in order for the vehicle to automatically slow down in anticipation of tight turns. We approximated lateral acceleration as $a_y=v\dot{\psi}=\frac{v^2 \tan(\delta)}{l}$ \cite{werling2012automatic}, conducted a linearization, and incorporated a minmax-type penalty in the cost function. This method has three (significant) disadvantages: references $v_j^\text{ref}$ and $\delta_j^\text{ref},~\forall j=0,\dots,N$ are required for computation, it is not obvious how to \textit{weight} said penalities, and the formulation does \textit{not} guarantee operation within friction limits. We therefore instead opt for the following algorithm which we label TOSS (time-optimal smooth steering): 
\begin{enumerate}
\item Solve \eqref{eq:LP} to obtain $\{e_{\psi,j}\}_{j=0}^N$, $\{e_{y,j}\}_{j=0}^N$, $\{v_j=1/q_j\}_{j=0}^{N-1}$, $\{\delta_j\}_{j=0}^{N-1}$, $\{t_j\}_{j=0}^{N}$ along $\{s_j\}_{j=0}^N$.
\item Use the trajectories of Step 1) as references for a second solution of \eqref{eq:LP} with \textit{additional} constraints
\begin{equation}
q_j \geq \frac{1}{v_j^\text{max,fric}},~\forall j=0,\dots,N-1,
\end{equation}
where $v_j^\text{max,fric}$ is computed according to \cite[Sect. 3]{funke2016simple}, assuming a friction coefficient $\mu$ (in Section \ref{sec_numerical_experiments}, $\mu=0.8$), and denoting the maximum admissible velocity permitting operation within vehicle tire friction limits.  
\end{enumerate}
The first step is to generate a suitable vehicle trajectory based on which $v_j^\text{max,fric}$ can be computed. The second step is to refine velocity control; see Section \ref{sec_numerical_experiments} for the implications. 

\subsection{Extension to dynamic vehicle models\label{subsec_DynVehMdl}}

So far, the discussion focused on \textit{kinematic} vehicle models. Let us extend the discussion to \textit{dynamic} vehicle models \cite{kong2015kinematic}, \cite{graf2017spatial}. The equivalent to \eqref{eq_def_t_apostrophe} can be derived analogously, resulting in 
\begin{equation}
t' = \frac{\rho_s - e_y}{\rho_s \left(v_x \cos(e_\psi) - v_y \sin(e_\psi)\right)},\label{eq_def_t_apostrophe_dynamic}
\end{equation}
where $v_x$ and $v_y$ denote longitudinal and lateral velocities relative to the inertial vehicle frame. Importantly, $v_x$ and $v_y$ are vehicle states and not control variables anymore.
\begin{proposition}
The linearized dynamics of \eqref{eq_def_t_apostrophe_dynamic}, of the form 
$t_{j+1}=t_j + a_{t,j} z_j + b_{t,j}u_j + g_{t,j}$, violate Property \ref{def_property1}.
\end{proposition}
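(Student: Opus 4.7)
The plan is to reproduce, almost verbatim, the counterexample strategy used in the proof of Proposition \ref{def_proposition1}, adapted to the fact that $v_x$ (and $v_y$) are now states rather than controls. The time update \eqref{eq_def_t_apostrophe_dynamic} has exactly the same ``reciprocal-in-velocity'' structure that made the kinematic linearization non-monotone in $t$, so I expect the same mechanism to produce a violating direction.

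First, I would spell out the linearization at a reference point. Letting the augmented state vector include $v_x$ and $v_y$, I would write
\[
 t_{j+1} = t_j + a_{t,j}^{e_\psi}(e_{\psi,j}-e_{\psi,j}^\text{ref}) + a_{t,j}^{e_y}(e_{y,j}-e_{y,j}^\text{ref}) + a_{t,j}^{v_x}(v_{x,j}-v_{x,j}^\text{ref}) + a_{t,j}^{v_y}(v_{y,j}-v_{y,j}^\text{ref}) + f_j^\text{ref},
\]
where $f_j^\text{ref}$ is \eqref{eq_def_t_apostrophe_dynamic} evaluated at the reference and the coefficients $a_{t,j}^{(\cdot)}$ are the partials of the right-hand side with respect to the corresponding state. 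Under the standing assumptions ($e_\psi\in(-\pi/2,\pi/2)$, $\rho_s>e_y$, forward motion $v_x^\text{ref}\cos(e_\psi^\text{ref})-v_y^\text{ref}\sin(e_\psi^\text{ref})>0$), direct differentiation yields
\[
 a_{t,j}^{v_x} = -\,\frac{(\rho_{s,j}^\text{ref}-e_{y,j}^\text{ref})\cos(e_{\psi,j}^\text{ref})}{\rho_{s,j}^\text{ref}\bigl(v_{x,j}^\text{ref}\cos(e_{\psi,j}^\text{ref})-v_{y,j}^\text{ref}\sin(e_{\psi,j}^\text{ref})\bigr)^2},
\]
which is strictly negative. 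This is the exact analogue of the sign-definite coefficient $b_{t,j}^v$ that drove the kinematic proof.

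Second, to exhibit one counterexample it suffices to collapse every other contribution. I would set $e_{\psi,j}=e_{\psi,j}^\text{ref}$, $e_{y,j}=e_{y,j}^\text{ref}$, $v_{y,j}=v_{y,j}^\text{ref}$, so that $t_{j+1}-t_j = f_j^\text{ref} + a_{t,j}^{v_x}(v_{x,j}-v_{x,j}^\text{ref})$. Because $a_{t,j}^{v_x}<0$ and $f_j^\text{ref}$ is a fixed finite quantity, choosing $v_{x,j}$ sufficiently larger than $v_{x,j}^\text{ref}$, e.g.\ $v_{x,j} > v_{x,j}^\text{ref} + f_j^\text{ref}/|a_{t,j}^{v_x}|$, forces $t_{j+1}<t_j$ and hence violates Property \ref{def_property1}. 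A transparent closed-form analogue of ``$v_j>2v_j^\text{ref}$'' can be read off: with $v_{y,j}^\text{ref}=0$ and $e_{\psi,j}^\text{ref}=0$ one recovers exactly the kinematic condition $v_{x,j}>2v_{x,j}^\text{ref}$.

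The only non-routine step is justifying that $v_x$ can indeed be perturbed independently of the other states in the linearized model; this is the obstacle worth pausing on. Since the proposition concerns the linearized update of the single scalar equation \eqref{eq_def_t_apostrophe_dynamic} around an admissible reference, $v_{x,j}$ and $v_{y,j}$ are free parameters at the linearization stage (whatever subsequent dynamic equations couple them to $\delta$ and drive torque), so the counterexample construction is valid. The remark following Proposition \ref{def_proposition1} transfers verbatim: large-$\rho_s$ and small-$e_\psi$ approximations of \eqref{eq_def_t_apostrophe_dynamic} share the same reciprocal dependence on $v_x$ and therefore admit analogous counterexamples.
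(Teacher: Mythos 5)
Your proof is correct and is exactly the argument the paper intends: its own proof of this proposition consists of the single line ``Similar to the one of Proposition \ref{def_proposition1},'' and your write-up carries out precisely that adaptation, identifying the sign-definite coefficient $a_{t,j}^{v_x}<0$ (the analogue of $b_{t,j}^v$) and constructing a counterexample by perturbing $v_{x,j}$ alone, recovering the kinematic condition $v_{x,j}>2v_{x,j}^\text{ref}$ in the special case $e_{\psi,j}^\text{ref}=0$, $v_{y,j}^\text{ref}=0$. No gaps; you have simply made explicit what the paper leaves implicit.
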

\begin{proof}
Similar to the one of Proposition \ref{def_proposition1}.
\end{proof}
Additional complexity arises from state variable $v_y$, that is absent in the kinematic case. In practice, typically $v_y\ll v_x$.
\begin{proposition}
The coordinate transformation $q_j^{v_x} = \frac{1}{v_{x,j}}$ and dynamics
\begin{equation}
t_{j+1} = t_j + \frac{ (s_{j+1} - s_j)(\rho_{s,j}^\text{ref}-e_{y,j}^\text{ref}) }{\rho_{s,j}^\text{ref} \cos(e_{\psi,j}^\text{ref}) } q_j^{v_x},
\label{eq_tjp1_uj_dynamicMdl}
\end{equation}
approximate the linearization of \eqref{eq_def_t_apostrophe} and satisfy Property \ref{def_property1}.
\end{proposition}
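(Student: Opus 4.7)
The plan is to mirror the argument used for Proposition \ref{def_proposition2}, with the one additional complication that the dynamic model carries $v_y$ as an extra state. First, I substitute $q^{v_x} = 1/v_x$ into \eqref{eq_def_t_apostrophe_dynamic} to rewrite the right-hand side as
\[
t' \;=\; \frac{(\rho_s - e_y)\, q^{v_x}}{\rho_s\bigl(\cos(e_\psi) - v_y\, q^{v_x}\sin(e_\psi)\bigr)},
\]
and then apply the spatial discretization $t_{j+1} = t_j + D_{s,j}\, F(e_{\psi,j}, e_{y,j}, v_{y,j}, q_j^{v_x})$ with $D_{s,j} = s_{j+1} - s_j$, so that it remains to linearize the scalar map $F$.

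Next, I linearize $F$ around the reference $(e_{\psi,j}^\text{ref}, e_{y,j}^\text{ref}, v_{y,j}^\text{ref}, q_j^{v_x,\text{ref}})$. The key structural observation, inherited from the kinematic case, is that the numerator of $F$ carries an overall factor $q^{v_x}$. Hence each of the partials $\partial_{e_\psi}F$, $\partial_{e_y}F$, and the new partial $\partial_{v_y}F$ retains a factor proportional to $q_j^{v_x,\text{ref}}$ when evaluated at the reference, whereas $\partial_{q^{v_x}}F$ tends to $(\rho_{s,j}^\text{ref} - e_{y,j}^\text{ref})/(\rho_{s,j}^\text{ref}\cos(e_{\psi,j}^\text{ref}))$ as $q_j^{v_x,\text{ref}}\to 0$. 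The state-deviation coefficients can in principle take either sign and would a priori permit $t_{j+1}<t_j$; but by sending $v_{x,j}^\text{ref}\to\infty$ (equivalently $q_j^{v_x,\text{ref}}\to 0$) they vanish uniformly, leaving exactly \eqref{eq_tjp1_uj_dynamicMdl}. For finite reference velocities this is the claimed approximate linearization.

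Property \ref{def_property1} is then immediate from the standing assumptions introduced after \eqref{eq_def_t_apostrophe}: under $\rho_s > e_y$ and $e_\psi\in(-\pi/2,\pi/2)$ the prefactor $(\rho_{s,j}^\text{ref} - e_{y,j}^\text{ref})/(\rho_{s,j}^\text{ref}\cos(e_{\psi,j}^\text{ref}))$ is strictly positive, and $q_j^{v_x} = 1/v_{x,j} > 0$ for forward motion, so $t_{j+1} > t_j$ for every $j$.

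The main obstacle I anticipate is the bookkeeping around the $v_y$-dependence: one must verify that $\partial_{v_y}F$ also inherits a $q^{v_x,\text{ref}}$ prefactor, so that the single limit $q^{v_x,\text{ref}}\to 0$ eliminates \emph{all} state-dependent coefficients simultaneously. Were this to hold only for $e_\psi$ and $e_y$ and not for $v_y$, the transformation would fail in the dynamic setting; confirming this uniform behaviour is the essential novelty of the argument relative to Proposition \ref{def_proposition2}.
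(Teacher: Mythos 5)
Your proof is correct and follows exactly the route the paper intends: its own proof is just ``Similar to the one of Proposition~\ref{def_proposition2}'', and your argument is that Proposition-\ref{def_proposition2} linearization argument carried over to \eqref{eq_def_t_apostrophe_dynamic}, with the elimination of all state-deviation coefficients in the limit $q_j^{v_x,\text{ref}}\to 0$ and positivity of the remaining prefactor giving Property~\ref{def_property1}. Your explicit check that $\partial_{v_y}F$ also vanishes in this limit (indeed it is proportional to $(q^{v_x,\text{ref}})^2$) is the detail the paper leaves implicit when it remarks that $v_y$ is ``omitted entirely from consideration'' to comply with Property~\ref{def_property1}.
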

\begin{proof}
Similar to the one of Proposition \ref{def_proposition2}. 
\end{proof}
The time dynamics \eqref{eq_tjp1_uj_dynamicMdl} are characteristic for the dynamic vehicle model case in the sense that state $v_y$ is omitted entirely from consideration. This is done to comply with Property \ref{def_property1}. Finally, note that according to the coordinate transformation, $q_j^{v_x} = \frac{1}{v_{x,j}}$, also all state equations (and consequently linearization and discretization routines) need to be updated. Based on \eqref{eq_tjp1_uj_dynamicMdl}, a LP similar to \eqref{eq:LP} can now be formulated for the case of dynamic vehicle models.

\subsection{Deployment for time-based reference trajectory tracking\label{subsec_HowUseForTimeBased}}

The trajectory resulting from the solution of \eqref{eq:LP} is given by $\{e_{\psi,j}\}_{j=0}^N$, $\{e_{y,j}\}_{j=0}^N$, $\{v_j=1/q_j\}_{j=0}^{N-1}$, $\{\delta_j\}_{j=0}^{N-1}$ and $\{t_j\}_{j=0}^{N}$, whereby all variables are described along the discretization grid $\{s_j\}_{j=0}^N$. There are various options for deployment. Using the, in general, non-uniformly spaced discrete time $\{t_j\}_{j=0}^{N}$ as the dependent variable, the aforementioned trajectories can be employed as references for a \textit{time-based} tracking controller. Either employing the kinematic bicycle model from Section \ref{subsec_spaitalVehDyn}, and correspondingly
\begin{equation}
\begin{bmatrix} \dot{s} & \dot{e}_\psi & \dot{e}_y \end{bmatrix}^T = \begin{bmatrix} \frac{\rho_s v \cos(e_\psi)}{\rho_s - e_y} & \frac{v\tan(\delta)}{l} - \dot{\psi}_s & v \sin(e_\psi) \end{bmatrix}^T,
\label{eq_def_dotsepsiey}
\end{equation}
or, for example, using a \textit{higher fidelity} model such as a dynamic bicycle model in road-aligned coordinate frame \cite{rajamani2011vehicle}. An alternative method is to solve the TOSS-algorithm according to Section \ref{subsec_IncorpFrictCstrts} at every sampling interval and directly apply $v_0$ and $\delta_0$ to the vehicle's low-level controllers; thereby combining reference path planning, velocity planning and reference tracking in \textit{one} step and in form of a spatial-based receding horizon control (RHC) scheme. Then, additional attention has to be addressed to the relation between sampling intervals and discretization grid \cite{graf2017spatial}.

\section{Numerical Simulations\label{sec_numerical_experiments}}

\subsection{Experiment 1}

The minimum-time traversal of a curvy road segment with one obstacle is sought. Two scheduling constraints are considered: $\mathcal{S}^\text{WP} = \{s^\text{obj},170\}$ and $\mathcal{T}^\text{WP} = \{10,16\}$, where $s^\text{obj}$ denotes the coordinate at which the obstacle is first encountered. For both waypoints, we did not further constrain admissible lateral vehicle position. The results are summarized in Fig. \ref{fig_EX2}. Note that only for better visualization of presented concepts both road lanes were admitted for maneuvering. In practice, the permissible road width can be controlled conveniently by \eqref{eq:OP_ey_constrts}. Several observations can be made. First, the spatiotemporal waypoints are met accurately. See Fig. \ref{fig_EX2} for the resulting optimal velocity trajectory. Second, TOSS satisfies all requirements of Objective \ref{obj_probForm}, combining steering \textit{and} velocity control. Third, Fig. \ref{fig_EX2_vehDimZoom} visualizes the effect of the vehicle dimension constraint discussed in Section \ref{subsec_VDCH}. 

\begin{figure*}[ht]
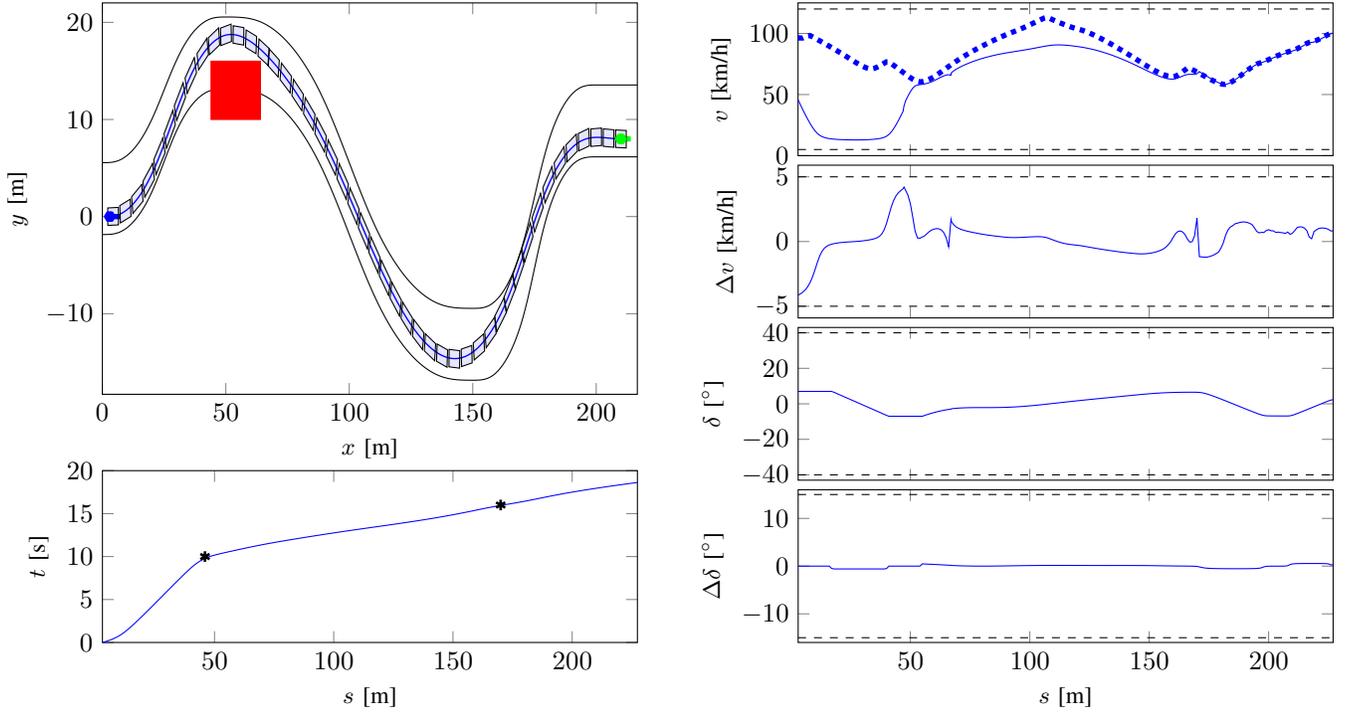

\vspace{0.3cm}
\centering
\input{fig_EX2_traj.tikz}~~~~~
\input{fig_EX2_ctrl.tikz}
\caption{Example 1. State, time and control trajectories. The obstacle is indicated in red. The two scheduling times and corresponding spatial coordinates are visualized by two black asterisks. Control absolute and rate constraints are indicated by black dashed lines. Velocity $v^\text{max,fric}(s)$ is generated according to Section \ref{subsec_IncorpFrictCstrts} and displayed by the blue dotted line. All solutions of TOSS are indicated by blue solid lines.}
\label{fig_EX2}
\end{figure*}

\begin{figure}[ht]
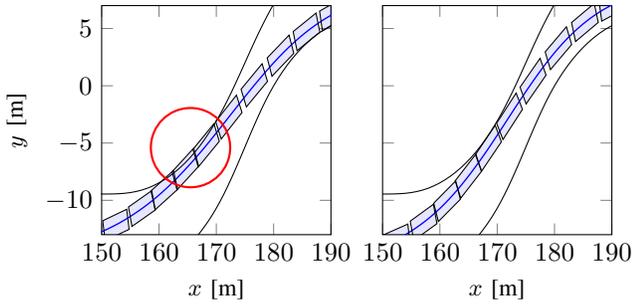

\vspace{0.3cm}
\centering
\input{fig_EX2_zoomWithOutMargin.tikz}
\hspace{-0.2cm}\input{fig_EX2_zoom.tikz}
\caption{Example 1. (Left) Vehicle dimensions are exceeding the road boundary in case of the static selection $\Delta e_{y,\tau}=w$; see the red-circled segment. (Right) Zoom-in into Fig. \ref{fig_EX2}. Road boundary constraints are met tightly due to the adaptive $\Delta e_{y,\tau}$-selection according to Section \ref{subsec_VDCH}.}
\label{fig_EX2_vehDimZoom}
\end{figure}

\begin{figure}[!ht]
\vspace{0.3cm}
\centering
\input{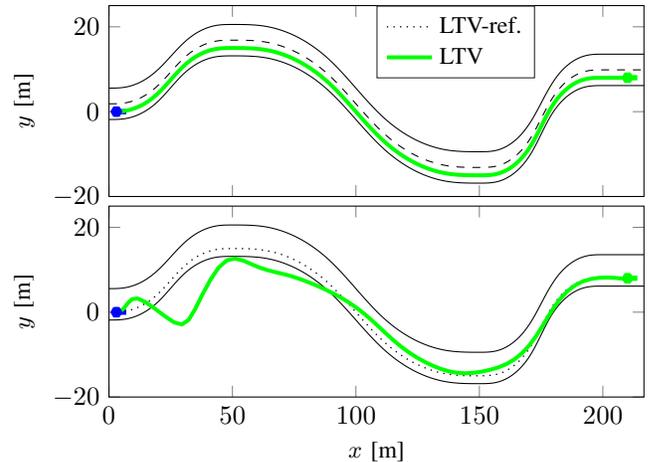}
\caption{Examples 2 (top) and 3 (bottom). For indication of a two-lane road, the separating strip is given by the black dashed line. For clarity, it is omitted from the bottom plot. For both examples, friction constraints \eqref{eq:LTV_vkmaxfric} are \textit{not} considered; see Section \ref{subsec_compTimeSpace} for the discussion. Reference speeds are set as $v^\text{ref}=50$km/h (top) and $v^\text{ref}=120$km/h (bottom), respectively. The reference path (LTV-ref.) can be tracked perfectly only for the former example. Since the reference path is identical in both examples, the lane departure is caused solely by the difference in \textit{reference speed}.}
\label{fig_EX23}
\end{figure}

\subsection{Comparisons with a time-based method\label{subsec_compTimeSpace}}

For comparison to a time-based method, we employ a linear time-varying model predictive control (LTV-MPC) approach \cite{falcone2007linear}, \cite{gutjahr2016lateral}. With \textit{time} serving as the dependent variable, we formulate the quadratic programming (QP):

\small
\begin{subequations}
\label{eq:LTV-MPC}
\begin{align}
\min &\ \  \sum_{k=1}^{K}\|\xi_k - \xi_k^\text{ref} \|_2^2  + \sum_{k=0}^{K-1} \| u_k - u_k^\text{ref} \|_2^2 + \| u_k - u_{k-1} \|_2^2 \label{eq_LTV_objFcn}\\
\mathrm{s.t.} &\ \ \xi_0 = \mathcal{X}(s_\tau),~u_{-1}=u(\tau-T_s), \\
&\ \ \xi_{k+1} = \bar{A}_k \xi_k + \bar{B}_ku_k + \bar{g}_k, \ k =0,\dots,K-1, \label{eq:LTV_xikp1_eq}\\
&\ \ v_k \leq  v_k^\text{max,fric}, \ k=0,\dots,K-1, \label{eq:LTV_vkmaxfric}\\
&\ \ u^\text{min} \leq u_k \leq  u^\text{max}, \ k=0,\dots,K-1, \label{eq:LTV_uk_hardcstrts}\\
&\ \ \Delta u^\text{min} \leq u_k-u_{k-1} \leq  \Delta u^\text{max}, \ k=0,\dots,K-1, \label{eq:LTV_Deltauk_hardcstrts}
\end{align}
\end{subequations} \normalsize 
with optimization variables $\{u_k\}_{k=0}^{K-1}$, $u=\begin{bmatrix} v & \delta \end{bmatrix}^T$, $\|\cdot\|_2$ denoting the 2-norm, optimization horizon $K$, sampling time $T_s$, subscript $k$ indexing sampling times over the optimization horizon, $\xi=\begin{bmatrix} x & y & \psi \end{bmatrix}^T$, and \eqref{eq:LTV_xikp1_eq} indicating the linearized and discretized vehicle dynamics of \eqref{eq_dotxypsi_nonlinkinbicmdl}. In simulations, we assume $T_s=0.1s$. Several remarks can be made. First, the focus of this paper is on trajectory planning. Thus, \eqref{eq:LTV-MPC} is solved once at planning time $\tau$.  We abbreviate the method as LTV in the following (instead of LTV-MPC). Second, \eqref{eq_LTV_objFcn} represents a reference tracking controller. In numerical simulations, we use the centerline coordinates of a lane for state references $\xi_k^\text{ref}$. For control references we select $\delta_k^\text{ref}=0$, and $v_k^\text{ref}$ dependent on the experiment. References are expressed with time as the dependent variable and must therefore be interpolated accounting for \textit{both} $T_s$ and, decisively, the reference speed, $v_k^\text{ref}$, at which the reference path is meant to be traversed. The reason for model selection \eqref{eq_dotxypsi_nonlinkinbicmdl} is that the resulting method according to \eqref{eq:LTV-MPC} performs very robustly when tracking reference paths at constant velocity, even if these paths exhibit discontinuous changes in curvature. Furthermore, it enables to work directly with positioning coordinates, not requiring a coordinate transformation to a road-aligned coordinate frame. Friction constraints are incorporated by computing $v_k^\text{max,fric}$ along the lane centerline~trajectory, and adapting $v_k^\text{ref}=\min(v_k^\text{ref},v_k^\text{max,fric})$. 

Four experiments are reported (Examples 2-5). They are meant to illustrate the benefits of the proposed method TOSS. Rate constraints are synchronized for spatial- and time-based methods with selection $\tilde{T}=T_s$. Throughout, initial vehicle velocity is set as 50km/h. Example 2 assumes a reference speed of 50km/h. In contrast, Example 3 assumes 120km/h and dismisses friction constraints (see the discussion below). The results of both experiments are displayed in Fig. \ref{fig_EX23}. Example 4 seeks time-optimal road traversal considering friction constraints. Results are displayed in Table \ref{tab_Ex4Results} and Fig. \ref{fig_EX4}. The corresponding timings shall be denoted by $t^{\text{LTV},\star}$ and $t^{\text{TOSS},\star}$. Example 5 considers one spatiotemporal constraint for TOSS. The end of the corridor, $s_N$, shall be reached at the time $t_N=t^{\text{LTV},\star}$. The result is summarized in Fig. \ref{fig_EX5}.
\begin{figure}[ht]
\vspace{0.3cm}
\centering
\input{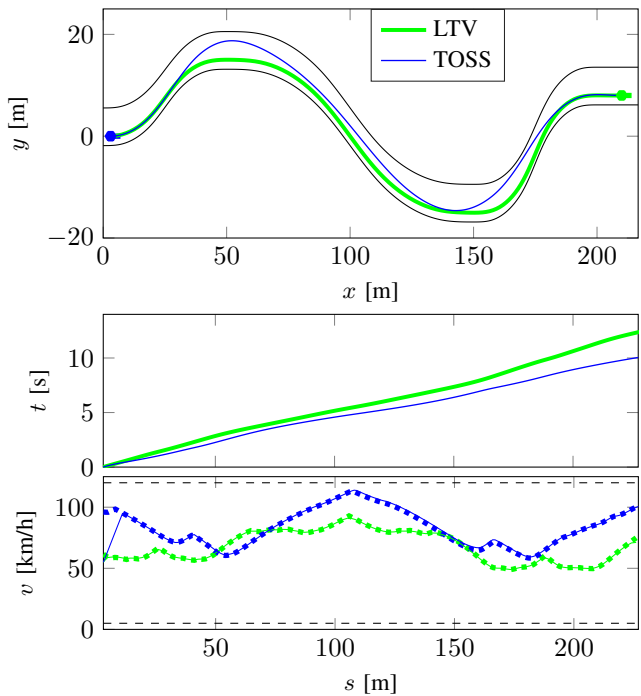}
\caption{Example 4. Comparison of LTV and TOSS. For time-optimal road traversal, both LTV and TOSS return velocity trajectories operating at the vehicle's tire friction limits; $v^\text{max,fric}(s)$ is displayed as the green and blue dotted line for LTV and TOSS, respectively. Because of steering trajectories exploiting the available road width for TOSS, higher $v^\text{max,fric}(s)$ result, that consequently translate to a faster road traversal time, see also Table \ref{tab_Ex4Results}.} \label{fig_EX4}
\end{figure}
\begin{figure}[ht]
\vspace{0.3cm}
\centering
\input{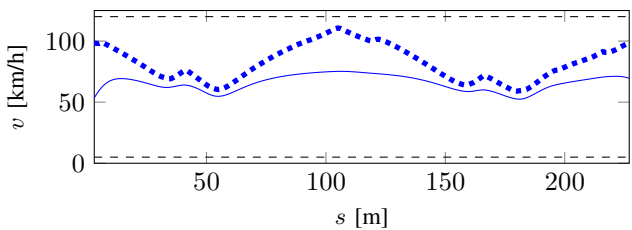}
\caption{Example 5. The velocity profile generated by TOSS when considering one spatiotemproal waypoint: $\mathcal{T}^\text{WP}=\{t^{\text{LTV},\star}$\} and $\mathcal{S}^\text{WP}=\{s_N\}$, where $t^{\text{LTV},\star}=12.4$ is the outcome of Example 4 for LTV. The blue dotted and solid line indicate $v^\text{max,fric}(s)$ and $v(s)$, respectively.} 
\label{fig_EX5}
\end{figure}
\begin{table}[ht]
\begin{center}
\vspace{0.3cm}
\caption{Example 4. The minimum time for road traversal is indicated by $t^\star$. Recorded maximum absolute velocity and velocity rate changes are indicated by $|v|^\text{max}$ and $|\Delta v|^\text{max}$. Similarly for $|v|^\text{min}$, $|\delta|^\text{max}$ and $|\Delta \delta|^\text{max}$.} 
\label{tab_Ex4Results}
\bgroup
\def\arraystretch{1}
\begin{tabular}{|c|c|c|c|c|c|c|}
\hline
\rowcolor[gray]{0.8}  &  $t^\star$ & $|v|^\text{min}$  & $|v|^\text{max}$ & $ |\Delta v|^\text{max}$ & $|\delta|^\text{max}$ & $|\Delta \delta|^\text{max}$ \\ \hline 
LTV &  12.4 & 48.7 &  91.5  &  1.0 &  10.4  &  2.3\\ 
\hline
TOSS &  10.1 & 55.0 &  113.8  &  5.0 &  7.0  &  0.6\\ 
\hline
\end{tabular}
\egroup
\end{center}
\end{table}

\subsection{Interpretations and guidelines for using TOSS}

Examples 2-3 illustrate the importance of reference trajectories for performance of the time-based method of Section \ref{subsec_compTimeSpace}. Even if a very suitable reference path is provided, namely, the obstacle-free lane centerline, safe operation is still not guaranteed. Unsuitable \textit{reference speeds} can cause the vehicle to catastrophically depart from within lane boundaries. A reference speed identical to the vehicle's initial speed of 50km/h resulted in perfect reference tracking. In contrast, for $v^\text{ref}=120$km/h, the vehicle departed from the road. Such a departure occured earliest for $v^\text{ref}= 94$km/h. Here a remark needs to be made. The inclusion of friction according to Section \ref{subsec_compTimeSpace} naturally lowers the reference speed (of originally 120km/h) and caused (if considered) the vehicle to remain within lane boundaries. In general though, the inclusion of \eqref{eq:LTV_vkmaxfric} still does not guarantee safe operation within road boundaries. This is because of the absence of \textit{corridor constraints}. While TOSS permits to easily enforce vehicle operation within road boundaries by \eqref{eq:OP_ey_constrts}, an equivalent formulation for a time-based method is more difficult. Because of the time parametrization, for the formulation of linearly constrained convex optimization problems, \textit{time-varying} polyhedral constraints have to be defined. An efficient logic for the design of such constraints that is not too conservative (too small polyhedra), and at the same time automated and sufficiently simple for applicability on arbitrarily curved road shapes is far from trivial. This holds generally for kinematic and dynamic vehicle models. In alternative to \eqref{eq:LTV-MPC}, an equivalent LTV-MPC problem can be formulated based on road-aligned but time-based dynamics \eqref{eq_def_dotsepsiey}. The simplest corresponding formulation of polyhedral constraints reads: $s_k^\text{min} \leq s_k \leq s_k^\text{max}$ and $e_{y,k}^\text{min} \leq e_{y,k} \leq e_{y,k}^\text{max},~\forall k=1,\dots,K$, with parameters $s_k^\text{min},s_k^\text{max},e_{y,k}^\text{min}$ and $e_{y,k}^\text{max}$ defining the admissible road segment at each time index $k$. Note, because of the time parametrization, the formulation is again strongly dependent on suitable and potentially time-varying reference velocities, that directly translate into the selection of aforementioned parameters. In contrast, for spatial-based methods, constraints \eqref{eq:OP_ey_constrts} hold \textit{time-invariantly}. For each $s_j$, the lateral admissible deviation is defined. Decisively, this is valid independently of \textit{when} and at \textit{what speed} each $s_j$ is reached. Finally, a remark to reference velocity generation. In ~\cite{urmson2008autonomous}, it is distinguished between four PWA designs: constant, linear, linear ramp and trapezoidal. A design logic is required to select one, \textit{and} to additionally determine suitable slope rates and velocity plateau levels that are dependent on the current vehicle state. Thus, for \textit{time-based} methods three tasks are required: reference path planning, reference velocity planning and reference tracking. The integration and synchronization of all of these methods is difficult, as emphasized in \cite{paden2016survey}. Instead of mission planning based on reference velocity assignments, the proposed spatial-based method encourages to work with \textit{waypoints}, and timings when these waypoints should be reached; suitable velocity trajectories are then returned automatically by \eqref{eq:LP}. This is regarded as a main benefit of TOSS. 

Examples 4-5 further illustrate characteristics. First, a road segment can be traversed significantly faster (Example 4). Second, while the LTV-method must operate the vehicle at its friction limits to achieve a corridor traversal time of $t^{\text{LTV},\star}=12.4$s, TOSS can achieve the same in a \textit{safer} fashion. Namely, the vehicle can be operated at velocities below its tire friction limits, see Fig. \ref{fig_EX5}. This is enabled by a) the minmax objective in \eqref{eq_OP_objFcn} resulting in trajectories exploiting the entire road width, and b) the spatiotemporal constraints \eqref{eq:OP_tWP_cstrts} for mission planning by means of waypoints.

To summarize, TOSS is built on the following key components: expressing time and vehicle dynamics in a road-aligned coordinate frame, an objective function combining final time $t_N$ and smooth steering (minmax objective), corridor constraints, a vehicle dimension constraint heuristic and friction constraints. Two consequences are characteristic. First, the combination of the steering-related part of the objective function \eqref{eq_OP_objFcn} in combination with corridor constraints \eqref{eq:OP_ey_constrts} causes a) smooth obstacle avoidance, and b) the exploitation of the complete admissible road width. A spatially-varying road width can be defined conveniently by $e_{y,j}^\text{min}$ and $e_{y,j}^\text{max}$ in \eqref{eq:LP}. Second, the incorporation of time dynamics \eqref{eq:OP_tjp1_eq} enables \eqref{eq:LP} to return a velocity profile. Instead of mission planning based on reference velocity assignments, the proposed spatial-based method thus encourages to work with \textit{waypoints}, and timings when these waypoints should be reached; suitable velocity trajectories are then returned automatically by \eqref{eq:LP}. Finally, we stress that the proposed method a) simultaneously generates \textit{both} velocity and steering trajectories, and b) is suitable for combining trajectory planning and tracking in \textit{one} step in a receding horizon control (RHC) scheme, applying velocity and steering commands directly to the vehicle's low-level controllers.

\section{Conclusion\label{sec_conclusion}}

We presented a linear programming-based tool for trajectory planning of automated vehicles by relating time to spatial-based system modeling, thereby enabling simultaneously time scheduling, time-optimal traversal of road segments and obstacle avoidance. We discussed the role of control rate constraints in a road-aligned coordinate frame. A heuristic constraint was presented to account for vehicle dimensions. We incorporated friction constraints in the second of two linear programs (LP) that are solved sequentially, whereby the solution of the first LP serves as input to the second LP. A comparison to a time-based method was given, illustrating the benefits of the proposed method. These include mission planning by means of waypoints and assignment of scheduling times at which these waypoints are meant to be traversed (spatiotemporal constraints), and the simultaneous generation of velocity and steering trajectories. 

The presented framework is expected to be also particularly useful for the control of automated vehicles at intersections~\cite{schildbach2016collision}, and for the coordination and scheduling of multi-vehicle systems, which is subject of ongoing work.

%

%
%
%
%
%
%

\nocite{*}
\bibliographystyle{ieeetr}
\bibliography{myref}

\begin{thebibliography}{10}

\bibitem{larson2015distributed}
J.~Larson, K.-Y. Liang, and K.~H. Johansson, ``A distributed framework for
  coordinated heavy-duty vehicle platooning,'' {\em IEEE ITS}, vol.~16, no.~1,
  pp.~419--429, 2015.

\bibitem{plessen2016multi}
M.~Graf~Plessen, D.~Bernardini, H.~Esen, and A.~Bemporad, ``Multi-automated
  vehicle coordination using decoupled prioritized path planning for multi-lane
  one-and bi-directional traffic flow control,'' in {\em IEEE CDC},
  pp.~1582--1588, 2016.

\bibitem{asadi2011predictive}
B.~Asadi and A.~Vahidi, ``Predictive cruise control: Utilizing upcoming traffic
  signal information for improving fuel economy and reducing trip time,'' {\em
  IEEE CST}, vol.~19, no.~3, pp.~707--714, 2011.

\bibitem{kamal2010board}
M.~A.~S. Kamal, M.~Mukai, J.~Murata, and T.~Kawabe, ``On board eco-driving
  system for varying road-traffic environments using model predictive
  control,'' in {\em IEEE CCA}, pp.~1636--1641, 2010.

\bibitem{urmson2008autonomous}
C.~Urmson {\em et~al.}, ``Autonomous driving in urban environments: Boss and
  the urban challenge,'' {\em JFR}, vol.~25, no.~8, pp.~425--466, 2008.

\bibitem{werling2010optimal}
M.~Werling, J.~Ziegler, S.~Kammel, and S.~Thrun, ``Optimal trajectory
  generation for dynamic street scenarios in a frenet frame,'' in {\em IEEE
  ICRA}, pp.~987--993, 2010.

\bibitem{ziegler2014trajectory}
J.~Ziegler, P.~Bender, T.~Dang, and C.~Stiller, ``Trajectory planning for
  {B}ertha—a local, continuous method,'' in {\em IEEE IV}, pp.~450--457,
  2014.

\bibitem{gu2013focused}
T.~Gu, J.~Snider, J.~M. Dolan, and J.-w. Lee, ``Focused trajectory planning for
  autonomous on-road driving,'' in {\em IEEE IV}, pp.~547--552, 2013.

\bibitem{qian2016optimal}
X.~Qian, F.~Altch{\'e}, P.~Bender, C.~Stiller, and A.~de~La~Fortelle, ``Optimal
  trajectory planning for autonomous driving integrating logical constraints:
  An miqp perspective,'' in {\em IEEE ITSC}, pp.~205--210, 2016.

\bibitem{werling2012automatic}
M.~Werling and D.~Liccardo, ``Automatic collision avoidance using
  model-predictive online optimization,'' in {\em IEEE CDC}, pp.~6309--6314,
  2012.

\bibitem{hoffmann2007autonomous}
G.~M. Hoffmann, C.~J. Tomlin, M.~Montemerlo, and S.~Thrun, ``Autonomous
  automobile trajectory tracking for off-road driving: Controller design,
  experimental validation and racing,'' in {\em IEEE ACC}, pp.~2296--2301,
  2007.

\bibitem{falcone2007predictive}
P.~Falcone, F.~Borrelli, J.~Asgari, H.~E. Tseng, and D.~Hrovat, ``Predictive
  active steering control for autonomous vehicle systems,'' {\em IEEE CST},
  vol.~15, no.~3, pp.~566--580, 2007.

\bibitem{brown2017safe}
M.~Brown, J.~Funke, S.~Erlien, and J.~C. Gerdes, ``Safe driving envelopes for
  path tracking in autonomous vehicles,'' {\em CEP}, vol.~61, pp.~307--316,
  2017.

\bibitem{gonzalez2016review}
D.~Gonz{\'a}lez, J.~P{\'e}rez, V.~Milan{\'e}s, and F.~Nashashibi, ``A review of
  motion planning techniques for automated vehicles,'' {\em IEEE ITV}, vol.~17,
  no.~4, pp.~1135--1145, 2016.

\bibitem{paden2016survey}
B.~Paden, M.~{\v{C}}{\'a}p, S.~Z. Yong, D.~Yershov, and E.~Frazzoli, ``A survey
  of motion planning and control techniques for self-driving urban vehicles,''
  {\em IEEE T-IV}, vol.~1, no.~1, pp.~33--55, 2016.

\bibitem{di2016vehicle}
S.~Di~Cairano, U.~Kalabi{\'c}, and K.~Berntorp, ``Vehicle tracking control on
  piecewise-clothoidal trajectories by {MPC} with guaranteed error bounds,'' in
  {\em IEEE CDC}, pp.~709--714, 2016.

\bibitem{kong2015kinematic}
J.~Kong, M.~Pfeiffer, G.~Schildbach, and F.~Borrelli, ``Kinematic and dynamic
  vehicle models for autonomous driving control design,'' in {\em IEEE IV},
  pp.~1094--1099, 2015.

\bibitem{funke2016simple}
J.~Funke and J.~C. Gerdes, ``Simple clothoid lane change trajectories for
  automated vehicles incorporating friction constraints,'' {\em Jrnl. of Dyn.
  Sys., Meas., and Ctrl.}, vol.~138, no.~2, p.~021002, 2016.

\bibitem{gao2012spatial}
Y.~Gao, A.~Gray, J.~V. Frasch, T.~Lin, E.~Tseng, J.~K. Hedrick, and
  F.~Borrelli, ``Spatial predictive control for agile semi-autonomous ground
  vehicles,'' in {\em Proc. 11th Int. Symp. on Adv. Veh. Ctrl}, 2012.

\bibitem{lot2014curvilinear}
R.~Lot and F.~Biral, ``A curvilinear abscissa approach for the lap time
  optimization of racing vehicles,'' {\em IFAC Proc. Vol.}, vol.~47, no.~3,
  pp.~7559--7565, 2014.

\bibitem{verschueren2014towards}
R.~Verschueren, S.~De~Bruyne, M.~Zanon, J.~V. Frasch, and M.~Diehl, ``Towards
  time-optimal race car driving using nonlinear mpc in real-time,'' in {\em
  IEEE CDC}, pp.~2505--2510, 2014.

\bibitem{graf2017spatial}
M.~Graf~Plessen, D.~Bernardini, H.~Esen, and A.~Bemporad, ``Spatial-based
  predictive control and geometric corridor planning for adaptive cruise
  control coupled with obstacle avoidance,'' {\em IEEE CST}, pp.~1--13, 2017.

\bibitem{karlsson2016temporal}
J.~Karlsson, N.~Murgovski, and J.~Sj{\"o}berg, ``Temporal vs. spatial
  formulation of autonomous overtaking algorithms,'' in {\em IEEE ITSC},
  pp.~1029--1034, 2016.

\bibitem{falcone2007linear}
P.~Falcone, M.~Tufo, F.~Borrelli, J.~Asgari, and H.~E. Tseng, ``A linear time
  varying model predictive control approach to the integrated vehicle dynamics
  control problem in autonomous systems,'' in {\em IEEE CDC}, pp.~2980--2985,
  2007.

\bibitem{rajamani2011vehicle}
R.~Rajamani, {\em Vehicle dynamics and control}.
\newblock Springer Science \& Business Media, 2011.

\bibitem{plessen2017spatial}
M.~Graf~Plessen, P.~F. Lima, J.~M{\aa}rtensson, A.~Bemporad, and B.~Wahlberg,
  ``Trajectory planning under vehicle dimension constraints using sequential
  linear programming,'' {\em arXiv:1704.06325}, 2017.

\bibitem{mcnaughton2011motion}
M.~McNaughton, C.~Urmson, J.~M. Dolan, and J.-W. Lee, ``Motion planning for
  autonomous driving with a conformal spatiotemporal lattice,'' in {\em IEEE
  ICRA}, pp.~4889--4895, 2011.

\bibitem{gutjahr2016lateral}
B.~Gutjahr, L.~Gr{\"o}ll, and M.~Werling, ``Lateral vehicle trajectory
  optimization using constrained linear time-varying {MPC},'' {\em IEEE ITS},
  vol.~18, no.~6, pp.~1586--1595, 2017.

\bibitem{schildbach2016collision}
G.~Schildbach, M.~Soppert, and F.~Borrelli, ``A collision avoidance system at
  intersections using robust model predictive control,'' in {\em IEEE IV},
  pp.~233--238, 2016.

\bibitem{nilsson2015receding}
J.~Nilsson, P.~Falcone, M.~Ali, and J.~Sj{\"o}berg, ``Receding horizon maneuver
  generation for automated highway driving,'' {\em CEP}, vol.~41, pp.~124--133,
  2015.

\end{thebibliography}



\end{document}